\newenvironment{proof}{\begin{IEEEproof}}{\end{IEEEproof}}
\DeclareMathOperator*{\dotleq}{\overset{.}{\leq}}
\DeclareMathOperator*{\dotgeq}{\overset{.}{\geq}}
\DeclareMathOperator*{\defeq}{\triangleq}
\newtheorem{theorem}{Theorem}
\newtheorem{corollary}{Corollary}[theorem]
\newtheorem{proposition}{Proposition}
\newtheorem{example}{Example} 
\newcommand{\bit}{\begin{itemize}}
\newcommand{\eit}{\end{itemize}}
\newcommand{\bc}{\begin{center}}
\newcommand{\ec}{\end{center}}
\newcommand{\ba}{\begin{array}}
\newcommand{\ea}{\end{array}}
\newcommand{\beq}{\begin{equation}}
\newcommand{\eeq}{\end{equation}}
\newcommand{\beqn}{\begin{equation*}}
\newcommand{\eeqn}{\end{equation*}}
\newcommand{\bean}{\begin{eqnarray*}}
\newcommand{\eean}{\end{eqnarray*}}
\newcommand{\bea}{\begin{eqnarray}}
\newcommand{\eea}{\end{eqnarray}}
\def\E{\mathbb{E}}
\begin{document}
\sloppy

\title{Wireless Coded Caching: A Topological Perspective}

\author{Jingjing Zhang and Petros Elia
\thanks{The authors are with the Mobile Communications Department at EURECOM, Sophia Antipolis, 06410, France (email: jingjing.zhang@eurecom.fr, elia@eurecom.fr).
The work of Petros Elia was supported by the ANR Jeunes Chercheurs project ECOLOGICAL-BITS-AND-FLOPS.}
}

\maketitle

\begin{abstract}
We explore the performance of coded caching in a SISO BC setting where some users have higher link capacities than others. Focusing on a binary and fixed topological model where strong links have a fixed normalized capacity 1, and where weak links have reduced normalized capacity $\tau<1$, we identify --- as a function of the cache size and $\tau$ --- the optimal throughput performance, within a factor of at most 8. The transmission scheme that achieves this performance, employs a simple form of interference enhancement, and exploits the property that weak links attenuate interference, thus allowing for multicasting rates to remain high even when involving weak users.
This approach ameliorates the negative effects of uneven topology in multicasting, now allowing all users to achieve the optimal performance associated to $\tau=1$, even if $\tau$ is approximately as low as $\tau\geq 1-(1-w)^g$ where $g$ is the coded-caching gain, and where $w$ is the fraction of users that are weak. This leads to the interesting conclusion that for coded multicasting, the weak users need not bring down the performance of all users, but on the contrary to a certain extent, the strong users can lift the performance of the weak users without any penalties on their own performance.
Furthermore for smaller ranges of $\tau$, we also see that achieving the near-optimal performance comes with the advantage that the strong users do not suffer any additional delays compared to the case where $\tau = 1$.
\end{abstract}

\section{Introduction}
Recently the seminal work in~\cite{MN14} introduced \emph{coded caching} as a means of using caches at the receivers in order to induce multicasting opportunities that lead to substantial removal of interference. This breakthrough provided impressive throughput gains, and inspired a sequence of other works such as~\cite{MNNU13,MND13,WTP:16,SJTLD:15,CFLsmallCaches:14,SG:16,AG:16,PMN:13,HKS:15,ZLW:15, BWT:16}, \nocite{PK:05ApproxUnivOpti}as well as~\cite{KNMD:14, GSDMC:12, JCM:13,UAS:15,LWG:16,HKD:14}, \nocite{OGKEBooc:12}and even extensions that are specific to wireless networks\cite{SMK:15, MN:15isit,ZFE:15,GKY:15,ZEinterplay:16,NMA:16,WTS:16}.

Focusing on the single-stream broadcast channel, the work in~\cite{MN14} considered a single transmitter with access to a library of $N$ files, serving a set of $K$ receiving users, each requesting a single file from this library. As is typical with caching techniques, the communication was split into two phases: the caching phase and the delivery phase. During the caching phase (off peak hours), each user could cache the equivalent of $M$ files (corresponding to a fraction $\gamma\defeq M/N$ of the library in each cache) without knowledge of what file each user will request. During the delivery phase (peak hours), which would commence upon notification of each user's requested file (one requested file per user), the transmitter would deliver (the remaining of) the single requested file to each user.

Emphasis in~\cite{MN14} was placed on the symmetric, error free, single-stream BC, where each link from the transmitter to any of the receivers was identical, with normalized capacity equal to 1 file per unit of time.
For this topologically symmetric setting, it was shown that a delivery phase with delay $T(K)\defeq \frac{K(1-\gamma)}{1+K\gamma}$ suffices to guarantee the delivery of any $K$ requested files to the users. This was achieved by caching a fraction $\gamma$ of each file at each cache, and then by using cache-aided multicasting to send the remaining information to $K\gamma+1$ users at a time. In this symmetric setting, the resulting coding gain $g_{max}\defeq \frac{K(1-\gamma)}{T(K)} = 1+K\gamma$ far exceeded the local caching gains typically associated to receiver-side caching.

What was also noticed though is that, because of multicasting, the performance suffered when the links had unequal capacities. Such uneven topologies, where some users have weaker channels than others, introduce the problem that any multicast transmission that is meant for at least one weak user, could conceivably have to be sent at a lower rate, thus `slowing down' the rest of the strong users as well.
For example, if we were to naively apply the delivery scheme in \cite{MN14} --- which consisted of a sequential transmission of $\binom{K}{K\gamma+1}$ different XORs (one XOR for each subset of $K\gamma+1$ users) --- we would have the case that even a single weak user would suffice for the performance to deteriorate such that $T(K,\tau) > T(K,\tau=1), \ \forall \tau<1$.
Such topological considerations\footnote{In wireless communications, there is a variety of topological factors --- including propagation path loss, shadow fading and inter-cell interference \cite{TV:05} --- which lead to having some links that are much weaker or stronger than others; a reality that has motivated a variety of works (e.g.~\cite{ETW:08,VKV:11,KV:11,KV:12,KV:12it,GCS:13,HCJ:12}) relating to \emph{generalized} degrees of freedom (GDoF).} have motivated work such as that in~\cite{BWT:16} which --- for the setting of the broadcast erasure channel --- includes a `balancing' solution where only weak users have access to caches, while strong users do not.

Our motivation is to mitigate the performance degradation that coded caching experiences when some link capacities are reduced.
The key to mitigating this topology-induced degradation, is a simple form of interference enhancement which exploits the natural interference attenuation in the direction of the weak links, and which allows us to maintain --- to a certain degree --- a constant multicasting flow of normalized rate 1.
\subsection {Cache-aided SISO BC}

We focus on the topologically-uneven wireless SISO $K$-user broadcast channel, where $K-W$ users have strong links with unit-normalized capacity, while the remaining $W$ users have links that are weak with normalized capacity $\tau$ for some fixed $\tau \in [0,1]$. For notational convenience we will assume that users $1,2,\dots,W$ are weak, and that users $W+1,\dots,K$ are strong.
In this setting, where a single-antenna transmitter communicates to $K$ single-antenna receiving users, at any time $t$, the received signal at user $k$ takes the form
\begin{align}
y_{k,t} = \sqrt{P^{\tau_{k}}}  h_{k,t} x_t + z_{k,t} \  \ \ k =1,2,\cdots,K
\end{align}
where the input signal $x_t$ has bounded power $\E \{|x_t|^2\} \leq 1$, where the fading $h_{k,t}$ and the noise $z_{k,t}$ are assumed to be Gaussian with zero mean and unit variance, and where the link strength is $\tau_k = 1$ for strong users, and $\tau_k = \tau$ for weak users. In this setting, the average received signal to noise ratio (SNR) for the link to user $k$ is given as\footnote{Additionally in the high $P$ regime of interest here, it is easy to see that $Pr(|\sqrt{P^{\tau_{k}}}  h_{k,t} |^2 \doteq P^{\tau_{k}}) =1$.},\footnote{We here use $\doteq$ to denote \emph{exponential equality}, i.e., we write $g(P)\doteq P^{B}$ to denote $\displaystyle\lim_{P\to\infty}\frac{\log g(P)}{\log P}=B$. Similarly $\dotgeq$ and $\dotleq$ will denote exponential inequalities.}
\[\E\{|\sqrt{P^{\tau_{k}}}  h_{k,t} x_t|^2\}  = P^{\tau_{k}}. \]

\begin{figure}[t!]
  \centering
\includegraphics[width=0.8\columnwidth]{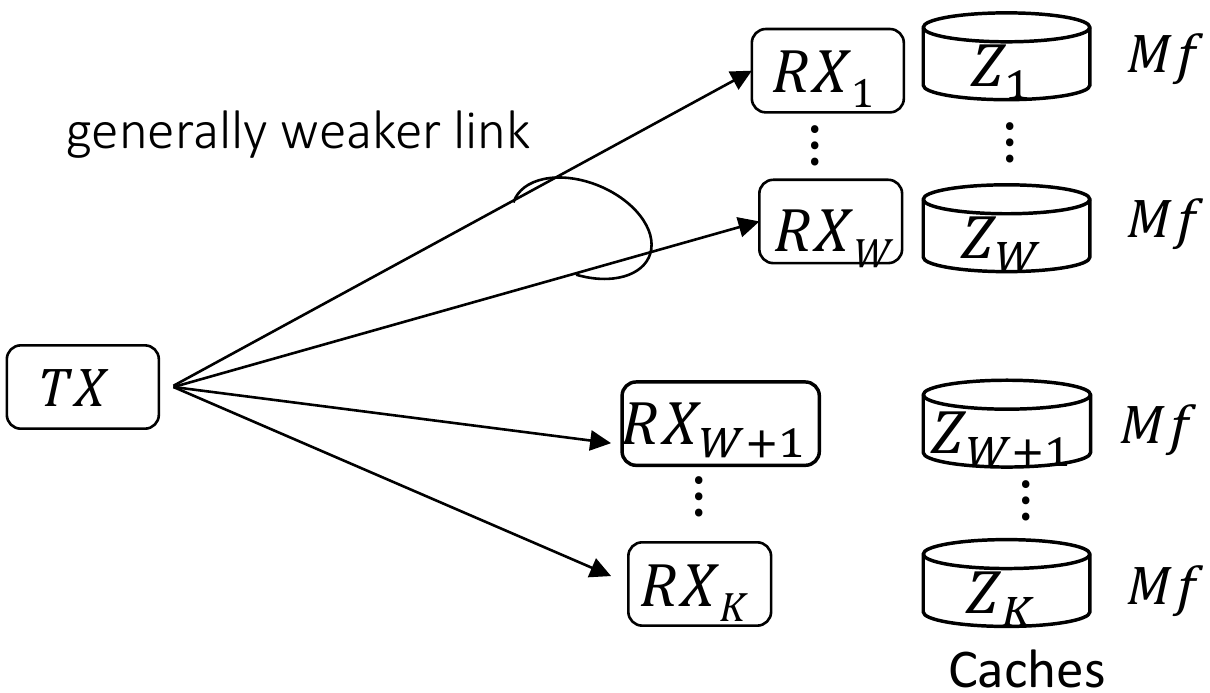}
\caption{Cache-aided $K$-user MISO BC.}
\label{fig:model}
\end{figure}

We focus on the high SNR (high $P$) setting, and we make the normalization --- without loss of generality --- that each library file $W_n, \ n=1,\dots,N$, has size $f$ (bits) which --- in the high SNR setting of interest here --- is set equal to $f = \log_2(P)$. Consequently the aforementioned capacity of a strong (interference free) link, is now \emph{1 file per unit of time}, while the capacity of a weak link is $\tau$ files per unit of time. The cache $Z_k$ of user $k$ has size $Mf$ bits, where $M$ ($M \leq N$) defines the aforementioned normalized cache size
\begin{align} \label{eq:gamma1}
\gamma \defeq \frac{M}{N}.
\end{align}
Our results consider the case where $N \geq K$, and consider the measure of performance $T$ --- in time slots, per file served per user --- needed to complete the delivery process, \emph{for any request}. After the aforementioned normalization $f = \log (P)$, this measure matches that in \cite{MN14}.

\subsection{Notation and conventions}
We will use $\mathcal{K} \defeq \{1,2,\cdots,K\}$ to denote the (indices of the) set of all users, $\mathcal{W} \defeq\{1,2,\cdots,W\}$ to denote the set of weak users, and $\mathcal{S} \defeq \{W+1,\cdots,K\}$ to denote the set of strong users. We will also use $w\defeq W/K$ to define the fraction of the users that are weak. We remind the reader that $\binom{n}{k}$ will be the $n$-choose-$k$ operator, and $\oplus$ will be the bitwise XOR operation. If $A$ and $B$ are two sets, then $A \backslash B$ denotes the difference set. For a transmitted signal $x$, we will use $\text{dur}(x)$ to denote the transmission duration (in units of time) of that signal. %Logarithms are of base~$e$, while $\log_2(\cdot)$ will represent a logarithm of base~2.
We will use $\Gamma \defeq \frac{KM}{N} = K\gamma$ to denote the cumulative (normalized) cache size, and for any integer $L$, we will use \begin{align}\label{eq:Tcentral}T(L)\defeq \frac{L(1-\gamma)}{1+L\gamma}\end{align} to denote the delay associated to the original coded caching solution in \cite{MN14} with $L$ strong users and no weak users ($\tau = 1$).

Consequently we will use $T(K) \defeq \frac{K(1-\gamma)}{1+K\gamma}$ to describe the performance for the case of $L=K$ users, as this was derived
in~\cite{MN14} for integer $K\gamma= \{0,1,\cdots,K\}$ (for the general $K\gamma$, the lower convex envelope of the integer points is achievable).
Similarly $T(K-W)=\frac{(K-W)(1-\gamma)}{1+(K-W)\gamma}$ will simply correspond to the case of $L=K-W$, and $T(W)=\frac{W(1-\gamma)}{1+W\gamma}$ to the case of $L=W$, and \emph{we stress that $T(K),T(K-W),T(W)$ all correspond to the case of $\tau = 1$}. We here note that for clarity of exposition, we allow for an integer relaxation on $(K-W)\gamma$ and $W\gamma$. This relaxation, which allows for crisp expressions, will be lifted in Section~\ref{sec:MNextension} which, for completeness, presents the extension of the algorithm in~\cite{MN14} for any $\gamma$, using memory-sharing between files (see also\cite{HA:2015}).

\section{Throughput of topological cache-aided BC}
The following describes, within a factor of 8, the optimal $T(\tau)$ as a function of $K,W,\gamma,\tau$. The results use the expression
\[\bar \tau_{thr} = \frac{T(W)}{T(W)+T(K-W)}\]
and
 \begin{align} \label{eq:tauThres}
    \tau_{thr} = \left\{\begin{array}{lr}
        1-\frac{\binom{K-W}{K \gamma+1}}{\binom{K}{K \gamma+1}}, & \text{for } W < K(1-\gamma)\\
        1, & \text{otherwise}.
        \end{array}\right.
 \end{align}

 The following applies to the case of centralized placement.
\vspace{3pt}
\begin{theorem} \label{thm:centralT}
In the $K$-user topological cache-aided SISO BC with $W$ weak users,
\begin{align}
T(\tau) \!= \!\left\{ \!\!\!{\begin{array}{*{20}{c}}
\frac{T(W)}{\tau}, & \! 0 \leq \tau  < \bar\tau_{thr} \\
\min \{T(K-W)+T(W), \frac{\tau_{thr}  T(K)  }{\tau}\}, & \!\bar\tau_{thr} \leq  \tau \leq \tau_{thr} \\
T(K), &  \!\tau_{thr}  <  \tau \leq 1
\end{array}} \right.
\end{align}
is achievable, and has a gap from optimal
\begin{align}
\frac{T(\tau)}{T^*(\tau)} \leq 8
\end{align}
that is always less than 8.
\end{theorem}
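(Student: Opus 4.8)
The plan is to prove achievability by exhibiting two delivery schemes, reading off their delays as functions of $\tau$, and taking their lower envelope; and to prove the gap by combining three cut-set/genie lower bounds on $T^*(\tau)$ with the known constant-factor optimality of the Maddah--Ali--Niesen (MN) scheme.

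\emph{Interference-enhancement primitive.} First I would establish the building block that makes weak links attenuate rather than throttle. Over a block I superimpose two independent multicast codewords, $x=\sqrt{1-P^{-\tau}}\,x_w+\sqrt{P^{-\tau}}\,x_s$, where $x_w$ is a multicast destined to a set containing weak users (rate $\tau\log P$) and $x_s$ is a multicast destined only to strong users. In the high-$P$/GDoF sense I would check: (i) a weak target of $x_w$ sees SINR $\doteq P^{\tau}$ and decodes $x_w$ treating $x_s$ as noise, since its own attenuation pushes $x_s$ to the noise floor; (ii) a strong user first decodes the $x_w$-codeword at SINR $\doteq P^{\tau}$ (enough, as $x_w$ is sent at rate $\tau\log P$), cancels it, and decodes $x_s$ at SNR $\doteq P^{1-\tau}$, i.e. rate $(1-\tau)\log P$. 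A strong user needs only the transmitted symbols of $x_w$, not its cache-dependent meaning, so the cancellation is always valid. Thus in the time to deliver one weak multicast at rate $\tau$, a fraction $\tfrac{1-\tau}{\tau}$ of a strong multicast is piggybacked for free.

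\emph{The two schemes and the piecewise formula.} I would then pipe the MN delivery through this primitive in two ways. The \emph{joint} scheme runs one MN code over all $K$ users; of its $\binom{K}{K\gamma+1}$ XORs, $A\defeq\binom{K-W}{K\gamma+1}$ are strong-only and $B\defeq\binom{K}{K\gamma+1}-A$ involve a weak user. Sending the $B$ weak XORs at rate $\tau$ costs time $\tfrac{B}{\tau\binom{K}{K\gamma}}=\tfrac{\tau_{thr}T(K)}{\tau}$ and supplies $B\tfrac{1-\tau}{\tau}$ strong-XOR slots; a short computation shows the supply meets the demand $A$ exactly when $\tau\le\tau_{thr}$, giving delay $\tfrac{\tau_{thr}T(K)}{\tau}$ for $\tau\le\tau_{thr}$ and, after sending the residual strong XORs at rate $1$, delay exactly $T(K)$ for $\tau>\tau_{thr}$. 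The \emph{separate} scheme runs an MN code among the $W$ weak users (rate $\tau$, cost $T(W)/\tau$) and piggybacks the strong users' independent MN code (demand $T(K-W)$); the same balancing identity yields delay $T(W)/\tau$ for $\tau\le\bar\tau_{thr}$ and $T(W)+T(K-W)$ for $\tau\ge\bar\tau_{thr}$. Taking the better scheme at each $\tau$, and checking $\bar\tau_{thr}\le\tau_{thr}$, the subadditivity $T(K)\le T(W)+T(K-W)$, and $T(W)\le\tau_{thr}T(K)$ on $[0,\bar\tau_{thr})$, reproduces the three-line expression for $T(\tau)$.

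\emph{Converse and the factor $8$.} For the gap I would prove three lower bounds: (L1) a cut-set bound on the $W$ weak users, whose rate-$\tau$ links give $T^*(\tau)\ge\tfrac1\tau T_1^*(W)$, writing $T_1^*(L)$ for the optimal rate-$1$ $L$-user delay; (L2) a genie bound handing the weak files to all receivers, leaving $T^*(\tau)\ge T_1^*(K-W)$; and (L3) monotonicity, $T^*(\tau)\ge T^*(1)=T_1^*(K)$, since a strong link can emulate a weak one. Invoking the known constant-factor optimality $T_1^*(L)\ge\tfrac1{c_0}T(L)$ for $N\ge K$ with $c_0\le4$, (L1) and (L3) give ratio $\le c_0$ in regimes $1$ and $3$. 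In regime $2$, using $T(\tau)\le T(W)+T(K-W)$ and $2T^*(\tau)\ge\tfrac1{c_0}\bigl(T(W)+T(K-W)\bigr)$ (from (L1) with $\tau\le1$ and (L2)) gives ratio $\le 2c_0$. Hence the worst case is $2c_0\le 8$.

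\emph{Main obstacle.} I expect the difficulty to lie in the achievability bookkeeping rather than the converse: making the ``piggyback supply equals demand'' accounting rigorous at the two breakpoints so that the joint and separate delays meet continuously and their minimum is exactly the claimed function, and confirming the ordering $\bar\tau_{thr}\le\tau_{thr}$ together with the auxiliary inequalities above. The converse is then routine given the three bounds and the cited constant-factor optimality of MN.
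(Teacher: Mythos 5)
Your proposal is correct and follows essentially the same route as the paper: the same superposition/interference-enhancement primitive, the same two delivery schemes (joint MN over all $K$ users with strong-only XORs piggybacked at rate $1-\tau$, versus separate MN codes for weak and strong users) whose balancing conditions produce exactly the thresholds $\tau_{thr}$ and $\bar\tau_{thr}$, and a converse built on the factor-$4$ optimality of the uniform-link scheme. The only (immaterial) deviation is in the middle regime, where the paper bounds $T(K-W)+T(W)\leq 2T(K)$ and compares against $T^*(\tau)\geq T^*(\tau=1)$, while you average your cut-set and genie bounds to get $2T^*(\tau)\geq \frac{1}{4}\bigl(T(W)+T(K-W)\bigr)$; both yield the same factor of $8$.
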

\vspace{3pt}

\vspace{3pt}
\begin{proof}
The scheme that achieves the above performance is presented in Section~\ref{sec:achievability}, while the corresponding gap to optimal is bounded in Appendix~\ref{sec:gapproof}.
\end{proof}
\vspace{3pt}

What the above shows is that there are three regions of interest. In the first region where $\tau\geq \tau_{thr}$, despite the degradation in the link strengths, the performance of all users remains as if all links were uniformly strong (as if $\tau = 1$). In this setting, instead of experiencing the phenomenon that the weak users `pull down' the performance of all users, we observe the interesting effect of strong users bringing up the performance of the weak users, to the optimal $T(K)$ associated to $\tau = 1$. The conclusion is that in this first region, the reduction in the capacity of the weak links $\tau$, does not translate into a performance degradation. This is because, even when multicasting involves weak users, the employed superposition scheme allows for an overall multicasting rate of 1.
Then, there is an intermediate region where there is a degradation in the overall performance by a factor $\frac{\tau_{thr}}{\tau}$ (rather than by a factor $\frac{1}{\tau}$). Finally there is the third region $\tau \leq \bar{\tau}_{thr}$, where due to the substantially limited capacity of the weak links, the transmission to the weak users becomes the bottleneck and the performance is dominated by the delay of serving the weak users, and it deteriorates by a factor $\frac{1}{\tau}$.
Interestingly, within this region, and particularly when $\tau \in [0, \frac{w}{1+K(1-w)\gamma}]$, while the near optimal performance reflects the bottleneck due to the weak users, it is also the case (this can be seen in the description of the scheme) that the delivery to the strong users finishes much earlier, and that the strong users do not suffer any additional delays compared to the case where $\tau = 1$; each strong user completes reception of their file with delay that is not bigger than $T(K)$. %The fact that the link capacity of the other $W$ users is substantially reduced, does not affect the original performance of the strong users.

In all cases, we see an improvement over the aforementioned naive sequential transmission of XORs, for which it is easy to show that the performance takes the form
\begin{align}
T_{nv} & = T(K)\bigl( 1+\frac{\tau_{thr}}{\tau}(1-\tau)  \bigr) \\ &= \frac{T(K)}{\tau}\bigl(  1- (1-\tau_{thr})(1-\tau) \bigr)
\end{align}
where we see that $T_{nv}(\tau) >T(K)$ for any $\tau<1$.
The gains of the proposed method, compared to the naive sequential multicasting, are more prominent when $\tau$ is reduced ($0 \leq \tau  < \bar\tau_{thr}$), and when $K\gamma>1$ and $W\gamma<1$, in which case the gains are bounded as 
\[\frac{T_{nv}}{T(\tau)} < \frac{2}{W\gamma}\]
and can become large when $W\gamma$ becomes substantially small.
% which increases as $\gamma$ decreases, and which become most prominent when $K\gamma>1$ and $W\gamma<1$.
%We proceed with some examples.
\vspace{3pt}

\begin{example} ($K = 500, W = 50,\gamma = \frac{1}{50}$)
Directly from the above we see that
\begin{align}
T =\left\{ {\begin{array}{*{20}{c}}
\frac{24.5}{\tau}, & 0 \leq \tau < 0.36  \\
\min \{68.6, \frac{30.7}{\tau}\}, & 0.36 \leq  \tau \leq 0.69 \\
T(K) = 44.5, &  0.69 <  \tau \leq 1
\end{array}} \right.
\end{align}
which means that, with a tenth of the users being weak, as long as $\tau\geq 0.69$, there is no performance degradation due to reduced-capacity links, and every user receives their file with delay $T(K) = \frac{K(1-\gamma)}{1+K\gamma} = 44.5$ associated to $\tau = 1$.
%Furthermore, this same performance $T(K)$ is maintained \emph{for the strong users}, whenever $\tau \leq \frac{w}{1+K(1-w)\gamma} = \frac{0.1}{1+500(0.9)\gamma}
\end{example}%

\vspace{3pt}
Regarding the first region, the following quantifies the intuition that the topology threshold $\tau_{thr}$ (until which, capacity reductions do not degrade performance), is a function of the degree of multicasting (coding gain) $g_{max}\defeq K \gamma +1 = K(1-\gamma)/T(K)$.

\begin{corollary} \label{cor:tauthr}
The threshold $\tau_{thr}$ which guarantees full-capacity performance $T(K)$, lies inside the region $\tau_{thr}\in[1-(1-w)^{g_{max}},1-(1-w-\frac{w\gamma}{1-\gamma})^{g_{max}}]$, which also means that \[T(\tau) = T(K), \ \forall \tau\geq 1-(1-w)^{g_{max}} + \gamma^{g_{max}}.\]
Thus as $\gamma$ decreases, this threshold approaches 
\[\tau_{thr}\approx 1-(1-w)^{g_{max}}.\]
\end{corollary}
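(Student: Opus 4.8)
The plan is to work directly from the definition of $\tau_{thr}$ in~\eqref{eq:tauThres}. In the non-trivial regime $W<K(1-\gamma)$ --- the other case gives $\tau_{thr}=1$ at once, since there $K-W\le K\gamma<K\gamma+1=g_{max}$ forces $\binom{K-W}{g_{max}}=0$ --- I would rewrite the binomial ratio as the telescoping product
\[
\frac{\binom{K-W}{g_{max}}}{\binom{K}{g_{max}}}=\prod_{i=0}^{g_{max}-1}\frac{K-W-i}{K-i},
\]
a product of exactly $g_{max}=K\gamma+1$ factors. (It is worth noting in passing that this ratio is the probability that a uniformly random $(K\gamma+1)$-subset of users avoids all weak users, so $\tau_{thr}$ is the probability that such a subset hits a weak user; the lower bound below is then just the statement that sampling without replacement hits weak users at least as readily as sampling with replacement, which would give $(1-w)^{g_{max}}$.) The strategy is to sandwich this product between powers of its extreme factors, reducing the corollary to two elementary per-factor inequalities.

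First I would control each factor $f(i)\defeq\frac{K-W-i}{K-i}$. Cross-multiplying shows $f(i)\le f(0)=\frac{K-W}{K}=1-w$, the defect being $Wi\ge 0$, and the same computation shows $f$ is monotonically decreasing in $i$; hence its least value over $i\in\{0,\dots,g_{max}-1\}$ is attained at $i=g_{max}-1=K\gamma$, namely
\[
f(K\gamma)=\frac{K-W-K\gamma}{K-K\gamma}=1-\frac{w}{1-\gamma}=1-w-\frac{w\gamma}{1-\gamma}.
\]
Since all $g_{max}$ factors lie in $[\,f(K\gamma),\,1-w\,]$, raising the two extremes to the $g_{max}$-th power yields
\[
\Bigl(1-w-\tfrac{w\gamma}{1-\gamma}\Bigr)^{g_{max}}\le\prod_{i=0}^{g_{max}-1}f(i)\le(1-w)^{g_{max}},
\]
and subtracting from $1$ gives precisely $\tau_{thr}\in[\,1-(1-w)^{g_{max}},\,1-(1-w-\tfrac{w\gamma}{1-\gamma})^{g_{max}}\,]$, which is the asserted interval.

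For the full-capacity consequence I would combine the upper endpoint $U\defeq 1-(1-w-\tfrac{w\gamma}{1-\gamma})^{g_{max}}$ with the region structure of Theorem~\ref{thm:centralT}: since $T(\tau)=T(K)$ for every $\tau\ge\tau_{thr}$ and $\tau_{thr}\le U$, it suffices to write $U$ as $1-(1-w)^{g_{max}}$ plus a controlled correction. I would bound the interval width by a one-term binomial (mean-value) estimate, $U-\bigl(1-(1-w)^{g_{max}}\bigr)\le g_{max}(1-w)^{g_{max}-1}\tfrac{w\gamma}{1-\gamma}$, and then match this against the $\gamma^{g_{max}}$ correction stated. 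This final matching is the step I expect to be the main obstacle: the width vanishes only linearly in $\gamma$ for fixed $g_{max}$, so a naive crude bound does not immediately reproduce the $\gamma^{g_{max}}$ term, and the estimate has to be handled with care in the small-$\gamma$ regime. The limiting assertion $\tau_{thr}\to 1-(1-w)^{g_{max}}$, by contrast, is immediate: as $\gamma\to 0$ the quantity $\tfrac{w\gamma}{1-\gamma}\to 0$, so $U$ collapses onto the lower endpoint and squeezes $\tau_{thr}$ between the two.
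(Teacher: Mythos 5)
Your sandwich argument is exactly the paper's proof: the paper expands the same telescoping product $\prod_{i=0}^{K\gamma}\bigl(1-w-\frac{wi}{K-i}\bigr)$, observes that $\frac{wi}{K-i}$ is increasing in $i$, and bounds every one of the $g_{max}$ factors between $1-w$ and $1-w-\frac{w\gamma}{1-\gamma}$ to obtain the stated interval; your hypergeometric interpretation is a pleasant extra that the paper does not mention, but the mechanics are identical.

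The step you flag as the main obstacle is genuinely the weak point, and your suspicion is correct rather than a failure on your part. The paper disposes of it with the single sentence that ``it is easy to show that the difference between the upper and lower bound is not larger than $\gamma^{K\gamma+1}$,'' but that assertion is false in general: as you derived, the interval width is of order $g_{max}(1-w)^{g_{max}-1}\frac{w\gamma}{1-\gamma}$, i.e., linear in $\gamma$, and for this to be dominated by $\gamma^{g_{max}}$ one would essentially need $1-w\lesssim\gamma$. The paper's own Example~1 ($K=500$, $W=50$, $\gamma=1/50$, $g_{max}=11$) is a counterexample: there $(1-w)^{g_{max}}-\binom{K-W}{g_{max}}/\binom{K}{g_{max}}\approx 3.9\times 10^{-3}$ while $\gamma^{g_{max}}=0.02^{11}\approx 2\times 10^{-19}$, so $\tau_{thr}\approx 0.690$ exceeds $1-(1-w)^{g_{max}}+\gamma^{g_{max}}\approx 0.686$, and the displayed implication ``$T(\tau)=T(K)$ for all $\tau\ge 1-(1-w)^{g_{max}}+\gamma^{g_{max}}$'' fails for $\tau$ in between. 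A correct version of that clause would replace $\gamma^{g_{max}}$ by your width estimate (or simply by $\frac{g_{max}w\gamma}{1-\gamma}$), which still vanishes as $\gamma$ decreases and therefore still delivers the final limiting statement $\tau_{thr}\to 1-(1-w)^{g_{max}}$. In short, stop trying to reach $\gamma^{g_{max}}$: the target in the corollary is wrong, not your estimate.
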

\begin{proof}
The proof consists of basic algebraic manipulations and can be found in the Appendix.
\end{proof}

%\begin{figure}[t!]
%  \centering
% \includegraphics[scale=0.47]{Twithtau}
%\caption{For the case of $K=500, W=50$, the plot compares the delay associated to naive sequential multicasting (where every time a XOR is meant for at least one weak user, the multicasting rate is reduced to $\tau$), to the $T$ achieved here (where multicasting applies interference enhancement).}
%\label{Twithtau}
%\end{figure}
We again note that a simple sequential delivery of the XORs would have resulted in $\tau_{thr}=1$.

We extend the above to the link-capacity threshold
\begin{align}
\tau_{thr,G} \defeq\arg\min \{\tau: T(\tau) \leq G \cdot T(K), \ G \geq 2 \label{eq:definitionth}\}
\end{align}
until which, the performance loss is restricted to a factor of $G\geq 2$. For example, for any $\tau\geq \tau_{thr,2}$, the scheme guarantees that $T(\tau) \leq 2T(K)$.

\begin{figure}[t!]
  \centering
 \includegraphics[scale=0.35]{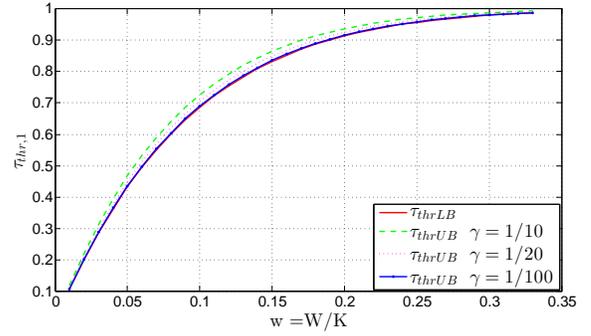}
\caption{$\tau_{thrLB} =1-(1-w)^{g_{max}}$ denotes the lower bound of $\tau_{thr}$, while $\tau_{thrUB} = 1-(1-w-\frac{w\gamma}{1-\gamma})^{g_{max}}$ denotes the upper bound.}
\label{tauThr1Bound}
\end{figure}

\begin{corollary} \label{cor:thresholdG}
For any $\tau \geq \tau_{thr,G} = \frac{w}{1+w(g_{max}-1)} \frac{g_{max}}{G}$ ($G \geq 2$), the performance degradation is bounded as $T(\tau) \leq G \cdot T(K).$
\end{corollary}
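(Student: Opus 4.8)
The plan is to show that for $G \geq 2$ the critical link strength $\tau_{thr,G}$ falls inside the first region $[0,\bar\tau_{thr})$ of Theorem~\ref{thm:centralT}, where the achievable delay takes the simple form $T(\tau) = T(W)/\tau$, and then to invert the inequality $T(\tau) \leq G\cdot T(K)$ there. I would first note that $T(\tau)$ is continuous and non-increasing in $\tau$: in region 1 it equals $T(W)/\tau$ which strictly decreases, at $\tau=\bar\tau_{thr}$ it attains $T(\bar\tau_{thr})=T(W)/\bar\tau_{thr}=T(W)+T(K-W)$, and it never increases afterwards, eventually settling at $T(K)$. Hence $\{\tau : T(\tau)\leq G\cdot T(K)\}$ is an interval whose left endpoint is precisely $\tau_{thr,G}$.

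The key step is to certify that this endpoint lies in region 1, and this is exactly where the hypothesis $G\geq 2$ is used. Since $T(L)=L(1-\gamma)/(1+L\gamma)$ is increasing in $L$, both $T(W)\leq T(K)$ and $T(K-W)\leq T(K)$, so
\[ T(\bar\tau_{thr}) = T(W)+T(K-W) \leq 2\,T(K) \leq G\cdot T(K). \]
Thus the target level $G\cdot T(K)$ is already met at the right end of region 1; as $T(\tau)=T(W)/\tau$ is strictly decreasing across that region, the crossing point $\tau_{thr,G}$ indeed lies in $[0,\bar\tau_{thr}]$, and the region-1 formula is the correct one to use.

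It then remains to solve $T(W)/\tau \leq G\cdot T(K)$, which gives $\tau_{thr,G}=T(W)/(G\cdot T(K))$, followed by routine simplification. Substituting $T(W)=W(1-\gamma)/(1+W\gamma)$ and $T(K)=K(1-\gamma)/(1+K\gamma)$, the factor $(1-\gamma)$ cancels, and with $W=wK$ and $g_{max}=1+K\gamma$ one obtains
\[ \frac{T(W)}{T(K)} = \frac{W(1+K\gamma)}{K(1+W\gamma)} = \frac{w\,g_{max}}{1+w(g_{max}-1)}, \]
so that dividing by $G$ yields the stated $\tau_{thr,G}=\frac{w}{1+w(g_{max}-1)}\frac{g_{max}}{G}$. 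The only real obstacle is the region-membership check of the second paragraph: the closed form depends on using $T(\tau)=T(W)/\tau$, which is valid only because $G\geq 2$ forces the threshold into region 1; for a target factor $G<2$ the crossing could occur in region 2 and the expression would change accordingly.
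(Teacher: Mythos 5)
Your proof is correct and follows essentially the same route as the paper's: both arguments use $G\geq 2$ together with the bound $T(K-W)+T(W)\leq 2T(K)$ to place the threshold in the region where $T(\tau)=T(W)/\tau$, and then solve $T(W)/\tau = G\cdot T(K)$. Your write-up merely adds the explicit monotonicity/continuity check and the algebraic simplification of $T(W)/T(K)$ to $\frac{w\,g_{max}}{1+w(g_{max}-1)}$, both of which the paper leaves implicit.
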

\vspace{3pt}
\begin{proof}
The proof is presented in Appendix~\ref{sec:threshold}.
\end{proof}
\vspace{3pt}

\begin{example} ($w = \frac{1}{10}, g_{max}=11$)
Here, as we have seen, $\tau_{thr} = 0.686$, whereas
\begin{align}
\tau_{thr,G} = \frac{0.55}{G},  \ G \geq 2
\end{align}
which means that any link-capacity reduction down to, for example, $\tau\geq \tau_{thr,2} = \frac{0.55}{2} = 0.275$, only comes with a performance deterioration of at most 2 ($T(\tau)\leq 2 T(K), \ \forall \tau\geq 0.275$).
\end{example}

\subsection{Decentralized case}
We proceed to provide similar results for the case of decentralized placement, where as described in~\cite{MND13}, the caching phase is a random process. The result takes the same form as above, except that now we substitute $T(L)$ from \eqref{eq:Tcentral} with the decentralized equivalent $T'(L) = \frac{1-\gamma}{\gamma} (1-(1-\gamma)^L)$ ($L=K,K-W,W$) (cf.~\cite{MND13}), and where we substitute $\tau_{thr},\bar\tau_{thr}$ with
\begin{align}
\tau'_{thr} &= \frac{1-(1-\gamma)^W}{1-(1-\gamma)^K}, \ \bar\tau'_{thr} =\frac{1-(1-\gamma)^W}{2-(1-\gamma)^W-(1-\gamma)^{K-W}} \notag .
\end{align}
For completeness we present the result below.
\vspace{3pt}
\begin{theorem} \label{theoremde}
In the $K$-user topological cache-aided SISO BC with $W$ weak users, and decentralized cache placement,
\begin{align}
T =\left\{ {\begin{array}{*{20}{c}}
\frac{T'(W)}{\tau} , & 0 \leq \tau < \bar\tau'_{thr} \\
T'(K-W) + T'(W), & \bar\tau'_{thr} \leq \tau \leq \tau'_{thr} \\
T'(K), &  \tau'_{thr}  <  \tau \leq 1
\end{array}} \right.
\end{align}
is achievable and order optimal. %, where $T'(L) = \frac{1-\gamma}{\gamma} (1-(1-\gamma)^L)$, $L=K,K-W,W$.
\end{theorem}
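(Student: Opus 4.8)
The plan is to mirror the proof of Theorem~\ref{thm:centralT}, replacing the centralized Maddah-Ali--Niesen placement of~\cite{MN14} with the decentralized random placement of~\cite{MND13}, and then re-deriving the thresholds and the converse for the resulting subfile structure. The observation that organizes the whole argument is that both thresholds are ratios of the decentralized delays: a short computation shows $\bar\tau'_{thr} = \frac{T'(W)}{T'(W)+T'(K-W)}$ and $\tau'_{thr} = \frac{T'(W)}{T'(K)}$. These are exactly the crossover point between the weak-limited and the separation strategies, and the feasibility point of the superposition strategy, respectively, so identifying the three regions reduces to locating where each strategy becomes the active one.

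For achievability I would reuse the interference-enhancement/superposition idea of Section~\ref{sec:achievability}, but now over decentralized subfiles. The governing accounting is that, with superposition, a rate-$1$ ``strong airtime'' of duration $L_S$ simultaneously lets each weak user collect up to $\tau L_S$ of its own content on the attenuated (top) layer; hence a weak coded-caching load of $T'(W)$ is delivered at effective rate $1$ precisely when $\tau L_S \ge T'(W)$, and otherwise the residual weak load is sent alone at rate $\tau$, giving a delivery time $\max\{L_S,\,T'(W)/\tau\}$ for a given organization of the strong airtime. I would then instantiate three organizations: (i) joint decentralized coded caching over all $K$ users, for which $L_S = T'(K)$, yielding delay $T'(K)$ and feasible iff $\tau \ge T'(W)/T'(K) = \tau'_{thr}$ (region~3); (ii) separate decentralized coded caching for the strong and the weak subsystems, for which $L_S = T'(K-W)+T'(W)$, yielding delay $T'(K-W)+T'(W)$ and feasible iff $\tau \ge \bar\tau'_{thr}$ (region~2); and (iii) a weak-limited mode in which the weak subsystem is served at rate $\tau$ while the strong subsystem piggybacks and finishes no later, giving $T'(W)/\tau$ (region~1).

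For order optimality I would combine two lower bounds. First, setting $\tau = 1$ reduces the problem to $K$-user decentralized coded caching, whose converse in~\cite{MND13} already gives $T^*(\tau) \ge c\,T'(K)$ for a universal constant $c>0$ (and, restricting to the strong subset, $\ge c\,T'(K-W)$). Second, a genie that hands the weak users all the strong users' received signals reduces the weak side to a $W$-user decentralized coded-caching problem over links of capacity $\tau$, yielding $T^*(\tau) \ge c\,T'(W)/\tau$. Together these give $T^*(\tau) \ge c\,\max\{T'(K-W),\,T'(W)/\tau\}$, and checking each region against the achievable values above — using $T'(K-W)+T'(W) \le 2\max\{T'(K-W),T'(W)/\tau\}$ and $T'(W)/\tau \ge T'(W)$ — bounds the ratio by a constant in every region.

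The main obstacle I anticipate is the decentralized achievability bookkeeping rather than the converse. Unlike the centralized case, where all subfiles share a common size, decentralized placement produces subfiles cached at every possible subset, including ``cross'' subfiles of a strong user's file cached only at weak users (and vice versa), as well as subfiles cached nowhere. I would need to verify that, after splitting the delivery into the strong and weak subsystems plus the superposed layer, the random subfile sizes still aggregate (in the $f = \log P \to \infty$ limit, by the concentration already used in~\cite{MND13}) to exactly the loads $T'(K)$, $T'(K-W)$ and $T'(W)$ claimed above, and that the superposition realizes the clean $\max\{L_S,\,T'(W)/\tau\}$ trade-off at each power level. Establishing that these cross terms do not inflate the strong airtime beyond the stated $L_S$ is the crux; once it holds, the thresholds and the constant-gap optimality follow from the elementary algebra above.
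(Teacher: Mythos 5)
Your proposal is correct and follows essentially the same route as the paper, whose proof of this theorem is just the remark that the delivery scheme is identical to the centralized one, that only the thresholds change to $\bar\tau'_{thr}=\frac{T'(W)}{T'(W)+T'(K-W)}$ and $\tau'_{thr}=\frac{T'(W)}{T'(K)}$ (exactly the identities you derive), and that order optimality follows from \cite{MND13} together with the gap argument of Theorem~\ref{thm:centralT}. Your version simply makes explicit the regime accounting and the subfile-size aggregation that the paper leaves implicit.
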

\vspace{3pt}

\begin{figure}[t!] \label{tauThr1gmax}
  \centering
 \includegraphics[scale=0.35]{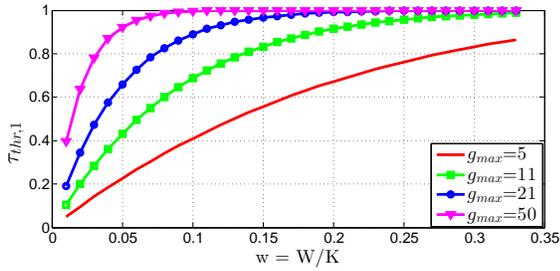}
\caption{$\tau_{thr}$ corresponding to distinct values for gains $g_{max}$. For example, for $g_{max}=5$ and $w=0.1$ then $\tau_{thr}\approx 0.4$.}
\label{fig:vanishingDelayedCSIT2}
\end{figure}

The delivery scheme that allows for the above, is identical to the one in the centralized setting (see below), and the only difference is in the analysis of $T(\tau)$ which accounts for the new thresholds $\tau'_{thr},\bar\tau'_{thr}$. The claim that the scheme is order optimal, follows from the arguments in \cite{MND13} and the arguments in the proof of the gap in the previous theorem. %, where the first two cases are direct (gap less than 2 from the order optimal $T'(K)$).

\section{Coded caching with simple interference enhancement} \label{sec:achievability}

We proceed to describe the scheme, for the cases in Theorem~\ref{thm:centralT}.

\subsection{Scheme for $\tau \geq \tau_{thr}$}
The following applies to the case where $W < K(1-\gamma)$.

\begin{figure}[t!]
  \centering
 \includegraphics[scale=0.52]{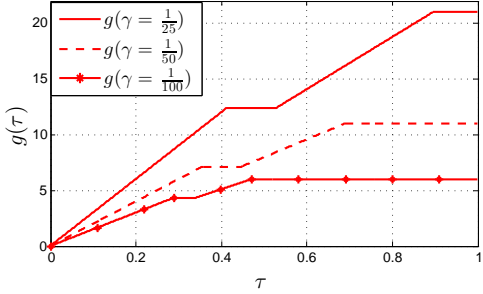}
\caption{The plot shows the gain as a function of $\tau$ when $K =500, W=50$. The horizontal lines represent the maximum gain $g_{max}$ corresponding to $\tau=1$, and demonstrate how these can be achieved even with lesser link capacities.}
\label{tauThr1Bound}
\end{figure}

\subsubsection{Placement phase}
The placement phase is identical to that in~\cite{MN14}, where we recall that each file $W_n, \ n=1,\dots,N$ is equally split into
$\binom{K}{\Gamma}$ subfiles $\{W_{n,\tau}\}_{\tau \in \Psi_{\Gamma}}$ 
%\begin{align} \label{eq:WnTau}
%\{W_{n,\tau}\}_{\tau \in \Psi_{\Gamma}}
%\end{align}
where $\Psi_{\Gamma}\defeq \{\tau \subset \mathcal{K} \ : \  |\tau| = \Gamma\}$, such that each cache $Z_k$ is then filled according to 
$Z_k=\{W_{n,\tau}\}_{n \in [N], \tau\in \Psi_{\Gamma},k \in \tau}$.
%\begin{align}\label{eq:ZkFill1} Z_k=\{W_{n,\tau}\}_{n \in [N], \tau\in \Psi_{\Gamma},k \in \tau}.\end{align}
%We recall that each subfile has size
%\begin{align}
%|W_{n,\tau}| = \frac{f}{\binom{K}{\Gamma}} \ \text{bits}.  \label{eq:WnTauSize} \end{align}

\subsubsection{Delivery phase}

At the beginning of the delivery phase, the transmitter must deliver each requested file $W_{R_k}$ to each receiver $k$, by delivering the remaining (uncached) subfiles $\{W_{R_k,\tau}\}_{k \notin \tau}$ for each user.% in $Z_k$ to the corresponding user $k$.

We first recall from~\cite{MN14} that for any $\psi \in  \Psi_{\Gamma+1} \defeq \{\psi\in \mathcal{K} \ : ~ |\psi|=\Gamma+1  \}$, then
\begin{align}
X_{\psi} \defeq \oplus_{k \in \psi} W_{R_k,\psi \backslash \{k\}}
\end{align}
suffices to deliver to each user $k\in \psi$, their requested file $W_{R_k,\psi \backslash k}$.
To satisfy all requests $\{W_{R_k} \backslash Z_k \}_{k=1}^K$, the entire set $\mathcal{X}_\Psi \defeq \{ X_{\psi}\}_{\psi \in  \Psi_{\Gamma+1}}$
consisting of $|\mathcal{X}_\Psi|=\binom{K}{\Gamma+1}$
folded messages (XORs), must be delivered, where each XOR contains (has size)
\begin{align} \label{eq:XpsiSize}
|X_{\psi}| & = |W_{R_k,\tau}| = \frac{f}{\binom{K}{\Gamma}} \ \text{(bits)}.
\end{align}
We distinguish between the subset of XORs $\mathcal{X}_{\Psi,s} \defeq \{ X_{\psi}: \forall \psi, \ s.t. \ \psi \cap \mathcal{W} = \emptyset\} \subset \mathcal{X}_\Psi$ that are only intended for strong users, and the remaining subset $\mathcal{X}_{\Psi,w} \defeq \mathcal{X}_\Psi \backslash \mathcal{X}_{\Psi,s}$ that have at least one weak user as an intended recipient.

%We first focus on the transmission of the XORs in $\mathcal{X}_{\Psi,w}$, to be delivered to the weak users $k\in \mathcal{W}$.

Let $T_1$ be the duration required to deliver all of $\mathcal{X}_{\Psi,w}$, to all weak users $k\in \mathcal{W}$.
%During this first period of duration $T_1$,
Let the transmission first take the form
\begin{align} \label{eq:transmissionform}
x_t = c_t + b_t, \  \ \ t \in [0, T_1]
\end{align}
where the power and rate of the symbols are allocated such that
\begin{align} \label{eq:PowerAllocation}
\E\{|c_t|^2\} &\doteq  P^0 , \ \   r_t^{(c)} = \tau  \\
\E\{|b_t|^2\} &\doteq P^{-\tau},  r_t^{(b)} = 1-\tau
\end{align}
where $r_t^{(c)}$ (resp. $r_t^{(b)}$) denotes the prelog factor of the number of bits $r_t^{(c)} f$ carried by symbol $c_t$ (resp. $r_t^{(b)}$) at time $t$. In the above, $c_t$ will carry information from $\mathcal{X}_{\Psi,w}$, while $b_t$ will carry the information from $\mathcal{X}_{\Psi,s}$. As we see, the reduced power of $b_t$ guarantees that it does not interfere with weak users (at least not above the noise level).

During this period, the received signals $y_{k,t}$ take the form
\begin{align}
y_{k,t} &= \underbrace{\sqrt{P} h_{k,t} c_t}_{P} + \underbrace{\sqrt{P} h_{k,t} b_t}_{P^{1-\tau}} + \underbrace{z_{k,t}}_{P^0}, \ \ k \in  \mathcal{S} \\
y_{k,t} &=\underbrace{\sqrt{P^{\tau}} h_{k,t} c_t}_{P^\tau} + \underbrace{\sqrt{P^{\tau}} h_{k,t} b_t}_{P^0} +\underbrace{z_{k,t}}_{P^0}, \ \ k \in \mathcal{W}
\end{align}
allowing each weak user to directly decode $c_t$, and allowing each strong user $ k \in  \mathcal{S}$ to first decode $c_t$ by treating $b_t$ as noise, and to then decode $b_t$ by removing $c_t$. This is achieved because the interference to the strong users was enhanced (see \cite{DavoodiJafarInterfEnhanc:15} and \cite{MDGinterfEnhanc:12}) in order for it to be removed.

Depending on the size of $\mathcal{X}_{\Psi,w}$ and $\mathcal{X}_{\Psi,s}$, we will have two cases. In the first case, all the information in $\mathcal{X}_{\Psi,s}$ is delivered by $b_t$ within the aforementioned duration $T_1$, and thus $T=T_1$. In the second case though, the delivery of $\mathcal{X}_{\Psi,s}$ takes longer than the delivery of $\mathcal{X}_{\Psi,w}$ (longer than $T_1$), in which case the remaining information is transmitted during an additional period of duration $T_2$, during which the transmission (as it is intended only for strong users) takes the simpler form
\begin{align}\label{eq:txSecondCase}
x_t = c_t, \  \ \ t \in [T_1, T_1 + T_2]
\end{align}
during which the power and rate are set as
\begin{align}\label{eq:PowerAndRateSecondCase}
\E\{|c_t|^2\} &\doteq  P^0 , \ \   r_t^{(c)} = 1
\end{align}
which allows each strong user to directly decode $c_t$.

In both cases, each strong user can decode $\mathcal{X}_{\Psi,w}$ and $\mathcal{X}_{\Psi,s}$, while each weak user can decode $\mathcal{X}_{\Psi,w}$, and the delivery process is completed. %. Combining with the caching content, each user can reconstruct the desired file $W_{R_{k}}, \forall k \in \mathcal{K}$.

\subsubsection{Calculation of $T$ \label{sec:calcT}}

To calculate the duration of the delivery phase, let us use
\[Q_{\bar{w}} \defeq  |\mathcal{X}_{\Psi,s}| |X_{\psi}| = \frac{\binom{K-W}{\Gamma+1} f}{\binom{K}{\Gamma}} \ \ \text{(bits)} \]
to denote the size (in bits) of $\mathcal{X}_{\Psi,s}$, and let us use
\[Q_{w} = |\mathcal{X}_{\Psi}||X_{\psi}| - Q_{\bar{w}}\ \ \text{(bits)} \]
to denote the size of $\mathcal{X}_{\Psi,w}$. We now treat the aforementioned two cases.
\paragraph{Case 1a: $T_1 > \frac{Q_{\bar{w}}}{(1-\tau)f}$ (this corresponds to $\tau\in[0 , \tau_{thr}]$)} \label{case:moderate}
Here $T=T_1$ is directly calculated, and takes the form
\begin{align} \label{eq:moderateCase}
T = T_1 = \frac{Q_{w}}{\tau f}  =  \frac{1}{\tau} \big(1-\frac{\binom{K-W}{\Gamma+1}}{\binom{K}{\Gamma+1}}\big) \frac{K(1-\gamma)}{1+K\gamma}= \frac{\tau_{thr}T(K)}{\tau}.
\end{align}
%This case will be considered together with other cases shown later.
\paragraph{Case 1b: $T_1 \leq \frac{Q_{\bar{w}}}{(1-\tau)f}$ (this corresponds to $\tau\in(\tau_{thr},1]$)}
The transition to this new case, happens as soon as $T_1 < \frac{Q_{\bar{w}}}{(1-\tau)f}$, which happens as soon as $\tau > \tau_{thr}$ (i.e., $\tau = \tau_{thr}$ is derived by setting $T_1 = \frac{Q_{\bar{w}}}{(1-\tau)f}$).
Recall that now $T=T_1+T_2$. We can easily calculate that the second period (during which we multicast to strong users at full rate) has duration
\[T_2 = \frac{Q_{\bar{w}}- (1-\tau)f T_1}{f}\]
where $Q_{\bar{w}}-(1-\tau)f T_1$ is the amount of the remaining information of $\mathcal{X}_{\Psi,s}$ that had not been handled during the first period of duration $T_1$. Adding the two components gives us
\begin{align}
T = T_1+T_2 = \frac{K(1-\gamma)}{1+K\gamma}=T(K)
\end{align}
which matches the aforementioned performance $T(K)$ corresponding to uniformly strong topology ($\tau=1$).

\subsection{Scheme for the case of $\tau \leq \tau_{thr}$}
The following applies for all $W\leq K$. Here the idea is that, because the weak link capacities are small, we treat the weak users separately from the strong users. While we generally transmit to both strong and weak users simultaneously, caching at the strong users is independent of the caching at the weak users, and each XOR is meant either for strong users exclusively, or for weak users exclusively. Transmission again takes the form $x_t = c_t+b_t$, and $c_t$ will deliver the group of XORs meant for weak users, while $b_t$ will deliver the group of XORs for the strong users.

For the case of the weak users, the total information that will be sent is $f T(W) \log (P) $ bits, while for the strong users, this will be $fT(K-W) \log (P) $ bits. There will be again two cases, where the split is again a function of the amount of information that needs to be delivered to the weak vs. to the strong users. In the first case, the transmission and allocation of power and rate, are the same as in~\eqref{eq:transmissionform} and~\eqref{eq:PowerAllocation}, while in the second case they will be the same as in~\eqref{eq:txSecondCase} and \eqref{eq:PowerAndRateSecondCase}.
\paragraph{Case 2a: $\frac{f T(K-W)}{(1-\tau)f} < \frac{f T(W)}{\tau f}$ (corresponds to $\tau \in [\bar{\tau}_{thr} ,\tau_{thr}]$)}
For this case --- corresponding to the scenario where the delivery to the strong users does not take longer than the delivery to the weak users --- $T$ can be readily calculated to be
\[T =  \frac{f T(W)}{\tau f} = \frac{T(W)}{\tau}.\]
\paragraph{Case 2b: $\frac{f T(K-W)}{(1-\tau)f} \geq \frac{f T(W)}{\tau f}$, (corresponds to $\tau\in[0,\bar{\tau}_{thr}]$)}
In this second case, in addition to the above mentioned $T_1 = \frac{T(W)}{\tau}$, the second period duration $T_2$ is readily calculated to be \[T_2 = \frac{f T(K-W)- (1-\tau)f T_1}{f}\]
which eventually gives
\begin{align}
T = T_1+T_2 = T(K-W)+T(W).
\end{align}
Combining this with the results corresponding to cases 1a and 2b, gives the desired
\[T(\tau)= \min \{T(K-W)+T(W), \frac{\tau_{thr}  T(K)  }{\tau}\}.\]
%This completes the scheme.

\section{Conclusion}
In this work we explored the behavior of coded caching in the topological broadcast channel (BC), identifying the optimal cache-aided performance within a multiplicative factor of 8. Our proposed scheme uses a simple form of interference enhancement to alleviate the negative effect of having to multicast to both strong and weak links. By showing that the optimal performance can be achieved even in the presence of weaker links, the work reveals a new role of coded caching which is to partially balance the performance between weaker and stronger users, and to a certain degree without any penalty to the performance of the stronger users.

\section{appendix}

\subsection{Proving the gap to optimal} \label{sec:gapproof}
To prove the gap to optimal in Theorem~\ref{thm:centralT}, we first recall from~\cite{HA:2015} (which corresponds to the case of $\tau=1$) that $\frac{T(K)}{T^*(\tau=1)} \leq 4$. %Furthermore we naturally have that $T^*(\tau)\geq T^*(\tau=1), \forall \tau \in [0,1]$.
Let us consider the following three cases.

\paragraph{Case 1 ($\tau_{thr}  <  \tau \leq 1$)} In this case, the bound is direct, by seeing the following
\[\frac{T(\tau)}{T^*(\tau)} = \frac{T(K)}{T^*(\tau)} \leq \frac{T(K)}{T^*(\tau=1)} \leq 4.\]

\paragraph{Case 2 ($\bar\tau_{thr} \leq  \tau \leq \tau_{thr}$)}  We first recall that $T(K)$ is increasing with $K$, since
\[T(K)=\frac{K(1-\gamma)}{1+K\gamma} = \frac{1-\gamma}{\gamma} (1-\frac{1}{1+K\gamma}).\]
This means that $T(K-W) \leq T(K)$ and $T(W) \leq T(K)$, and consequently that
\begin{align} \label{eq:gap2}
T(\tau) &=\min \{T(K-W)+T(W), \frac{\tau_{thr}  T(K)  }{\tau}\} \\
  &\leq T(K-W)+T(W) \leq 2T(K)
\end{align}
which yields the desired
\[\frac{T(\tau)}{T^*(\tau)} \leq \frac{2T(K)}{T^*(\tau)} \leq \frac{2T(K)}{T^*(\tau=1)} \leq 8.\]

\paragraph{Case 3 ($0  <  \tau \leq \bar\tau_{thr}$)} For this case, to get a lower bound on $T(\tau)$, we use the bound in~\cite{HA:2015} for a system with $K=W$ users, all of them having a link of capacity $\tau$. This means that the lower bound in \cite{HA:2015} holds, after simple normalization (division) by $\tau$. At the same time, we know that for this case, the achievable performance here is $\frac{T(W)}{\tau}$. Given that the normalization of the lower bound, matches the normalization of the achievable performance, then the gap remains, as in~\cite{HA:2015}, equal to $\frac{T}{T^*} \leq 4 .$

Combining the above three cases, yields the desired
\[\frac{T}{T^*}  \leq 8\]
which completes the proof.

\subsection{Proof of Corollary~\ref{cor:tauthr}} \label{sec:threshold0}
From \eqref{eq:tauThres} we recall that for $W < K(1-\gamma)$ then $\tau_{thr} =  1-\frac{\binom{K-W}{K \gamma+1}}{\binom{K}{K \gamma+1}}$. To simplify we note that
\begin{align}
\frac{\binom{K-W}{K \gamma+1}}{\binom{K}{K \gamma+1}} &= \frac{K-W}{K} \frac{K-W-1}{K-1} \cdots \frac{K-W-K\gamma}{K-K\gamma} \notag \\
                                                      &= (1-w) (1-w-\frac{w}{K-1}) \cdots (1-w-\frac{w K\gamma}{K-K\gamma}) \notag \\
																											&= \prod^{K\gamma}_{i=0} (1-w-\frac{wi}{K-i}) \label{eq:bound}
\end{align}
where the first equation comes from expanding the binomial coefficients $\binom{K-W}{K \gamma+1}$ and $\binom{K}{K \gamma+1}$. Since $\frac{wi}{K-i}$ is increasing with $i$, we have $0 \leq \frac{wi}{K-i} \leq \frac{wK\gamma}{K-K\gamma}$. Applying this inequality to the last equation above (cf.\eqref{eq:bound}), gives
\[(1-w-\frac{w\gamma}{1-\gamma})^{g_{max}} \leq \frac{\binom{K-W}{K \gamma+1}}{\binom{K}{K \gamma+1}} \leq (1-w)^{g_{max}} \]
which in turn gives the lower and upper bound of $\tau_{thr}$, in the form $\tau_{thrLB}=1-(1-w)^{g_{max}}$ and $\tau_{thrUB} = 1-(1-w-\frac{w\gamma}{1-\gamma})^{g_{max}}$.
It is easy to show that the difference between the upper and lower bound is not larger than $\gamma^{K\gamma+1}$, which vanishes as $\gamma$ decreases.%
%
%
%For example, e.g., when $\gamma = \frac{1}{10}, w=0.2, K=100$, then $\tau_{thrUB}-\tau_{thrLB} = 0.02$. Furthermore, when $K$ is large, $\frac{wi}{K-i}$ goes small, thus $\tau_{thr,1}$ converges to $\tau_{thrUB}$. Hence we take the upper bound as an approximation which is more cleaner.

\subsection{Proof of Corollary~\ref{cor:thresholdG}} \label{sec:threshold}

Let us recall from \eqref{eq:gap2} that when $\bar\tau_{thr} \leq  \tau \leq \tau_{thr}$ then
\begin{align}
T(\tau) &=\min \{T(K-W)+T(W), \frac{\tau_{thr}  T(K)  }{\tau}\} \\
  &\leq T(K-W)+T(W) \leq 2T(K)
\end{align}
which, together with the fact that $G\geq 2$, implies that such a performance degradation (beyond a factor of 2), requires that $\tau < \bar \tau_{thr}$, which in turn says that the achievable $T(\tau)$ takes the form $T(\tau) = \frac{T(W)}{\tau}$. Applying this in the definition in~\eqref{eq:definitionth}, yields the presented $\tau_{thr,G}$.

\subsection{Removing the integer relaxation constraint} \label{sec:MNextension}

To remove the aforementioned integer relaxation, we consider the extension of the centralized MN algorithm in~\cite{MN14}, to any value of $\gamma$ (not just when $K\gamma$ is an integer). This has already been addressed in \cite{HA:2015} which plots the intermediate values. For the sake of completeness we proceed to explicitly describe the corresponding performance, achieved here by the memory-sharing scheme described below. The following holds for any $\gamma$ and for $\tau= 1$.
\vspace{3pt}
\begin{proposition}\label{prop:generalT}
In the $K$-user cache-aided SISO BC, with $N\geq K$ files and cache size such that $K\gamma \in [t,t+1], t=0,1,\cdots, K-1$, then
\begin{align}
T''(K) &= \big((t+1)-K\gamma \big) \frac{K-t}{t+1} + (K\gamma-t)\frac{K-(t+1)}{t+2} \notag \\
       &= \frac{K-t}{t+1} + \frac{(K\gamma-t)(K+1)}{(t+1)(t+2)}
\end{align}
is achievable and it has a gap from optimal
\begin{align}
\frac{T''(K)}{T^*} \leq 4
\end{align}
that is less than 4.
\end{proposition}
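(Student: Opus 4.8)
The plan is to establish Proposition~\ref{prop:generalT} in two parts: first the achievability of $T''(K)$ via memory-sharing, and then the gap bound $T''(K)/T^* \leq 4$. For achievability, I would observe that the original scheme in~\cite{MN14} gives the integer points $T(K) = \frac{K(1-\gamma)}{1+K\gamma}$ exactly when $K\gamma$ is an integer, i.e., when $K\gamma = t$ we obtain delay $\frac{K-t}{t+1}$. The idea of memory-sharing is to split each file into two parts, cache a fraction $\frac{(t+1)-K\gamma}{1}$ of each file using the placement-and-delivery strategy tuned for cache parameter $t$, and the complementary fraction using the strategy tuned for cache parameter $t+1$. Since the two sub-schemes operate on disjoint portions of every file and both satisfy the per-user cache constraint when combined at the stated ratio, the total cache usage is exactly $K\gamma$, and the total delay is the convex combination $\big((t+1)-K\gamma\big)\frac{K-t}{t+1} + (K\gamma-t)\frac{K-(t+1)}{t+2}$. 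I would then simplify this convex combination algebraically to the closed form $\frac{K-t}{t+1} + \frac{(K\gamma-t)(K+1)}{(t+1)(t+2)}$; this is a routine manipulation collecting terms over the common denominator $(t+1)(t+2)$.

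For the gap bound, I would invoke the lower bound on $T^*$ from~\cite{HA:2015}, which is already cited earlier in the excerpt as giving $\frac{T(K)}{T^*(\tau=1)} \leq 4$ at the integer points. Since $T''(K)$ is, by construction, the lower convex envelope of the integer points $\{(t, T(K)|_{K\gamma=t})\}$, and since $T^*$ is itself a convex (or at least monotone and convex-envelope-dominated) function of the cache size, the factor-4 gap at the integer endpoints should propagate to the intermediate relaxed points. The cleanest argument is to bound $T''(K)$ from above by the larger of the two integer-point delays it interpolates, and to bound $T^*$ from below by the corresponding~\cite{HA:2015} cut-set-type bound, so that the ratio never exceeds the worst-case integer ratio of 4.

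The main obstacle I anticipate is making the convexity/interpolation argument for the gap fully rigorous at non-integer $K\gamma$. The four-factor gap is known only at the integer points, so one must argue that interpolating the achievable scheme linearly while the lower bound behaves at least as favorably (i.e., the true optimal cannot drop faster than the relaxed achievable curve rises) preserves the ratio. Concretely, I would fix $t$ with $K\gamma \in [t, t+1]$, write $T''(K)$ as the stated convex combination, and compare it against a lower bound $T^*$ evaluated at the same $\gamma$; the subtlety is that $T^*(\gamma)$ itself may not be exactly convex, so I would instead lower-bound $T^*(\gamma)$ by a convex function (the piecewise-linear interpolation of the~\cite{HA:2015} integer lower bounds) and then compare two convex combinations term-by-term. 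The remaining steps—verifying the algebraic simplification and confirming that each integer-point ratio is at most 4—are direct and follow from~\cite{HA:2015}.
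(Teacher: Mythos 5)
Your proposal matches the paper's own proof: achievability is established by exactly the same memory-sharing construction (split each file into pieces of relative size $(t+1)-K\gamma$ and $K\gamma-t$, run the MN scheme at integer cache points $t$ and $t+1$ on the respective pieces, and sum the two delays to get the stated convex combination), and the factor-4 gap is obtained, as in the paper, by appealing to the lower bound of~\cite{HA:2015}, which applies at all cache sizes and dominates the interpolated achievable curve. The extra care you describe about propagating the gap to non-integer $K\gamma$ is handled in the paper simply by noting that the interpolation only improves on the integer-point performance while the cited lower bound already covers arbitrary $\gamma$.
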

\vspace{3pt}

The above maintains the gap from optimal of $4$, simply because the interpolation gives an improved performance over the case where $K\gamma \in [1,2,\dots,K]$ (see also~\cite{HA:2015}). The expression coincides with the original $T(K)$ for integer values of $K\gamma$. The purpose of this proposition is to allow for the applicability of Theorem~\ref{thm:centralT} without the integer relaxation assumption. With $T''(L)$ in place, Theorem~\ref{thm:centralT} can apply, simply now with slightly different values for $\bar \tau_{thr}$ and $\tau_{thr}$, which though are more complicated and which do not offer any additional insight and are thus omitted.

Below we briefly describe the scheme.

\subsubsection{Proof of Proposition~\ref{prop:generalT}} \label{sec:schemegeneral}
Let $\Gamma =\frac{KM}{N} \in [t, t+1],$ for some $t=0,1,\cdots,K-1$. Let us start by splitting each file $W_n$ into two parts, where the first part $W^{(1)}_n$ has size $\big((t+1)-K\gamma\big) f$ and the second part $W^{(2)}_n$ has size $(K\gamma-t)f$. Split each cache $Z_k$ into two parts, $Z_{k,1}, Z_{k,2}$ such that $\frac{|Z_{k,1}|}{|Z_{k,2}|} = \frac{\big((t+1)-K\gamma\big)}{(K\gamma-t)}$. Focusing on the first part, apply the original MN algorithm, where now the library is $\{W^{(1)}_n\}_{n=1}^{N}$, the caches are $\{Z_{k,1}\}_{k=1}^K$, and caching is performed as though $K\gamma = t$, i.e., by splitting each half-file $W^{(1)}_n$ into $\binom{K}{t}$ equally-sized subfiles $W^{(1)}_{n,\tau},\tau\in \Psi_{t}$ (each subfile now has size $((t+1)-K\gamma) f / \binom{K}{t}$), and by filling the caches according to $Z_{k,1}=\{W^{(1)}_{n,\tau}\}_{n \in [N], \tau\in \Psi_{t},k \in \tau}$. Then simply create the sequence of $\binom{K}{t+1}$ XORs (where now each XOR is intended for $t+1$ users), the delivery of which requires
\begin{align}
T^{(1)} = (t+1-K\gamma) \frac{\binom{K}{t+1}}{\binom{K}{t}}.
\end{align}

We then do the same for the second half of the files (second library $\{W^{(2)}_n\}_{n=1}^{N}$) except that now we substitute $t$ with $t+1$, to get a corresponding duration of
\begin{align}
T^{(2)} = (K\gamma-t) \frac{\binom{K}{t+2}}{\binom{K}{t+1}}.
\end{align}
Combining the two cases yields the whole duration of the delivery phase to be
\begin{align}
T =T^{(1)}+T^{(2)} = \frac{K-t}{t+1} + \frac{(K\gamma-t)(K+1)}{(t+1)(t+2)}
\end{align}
which completes the proof.

\bibliographystyle{IEEEtran}
\bibliography{IEEEabrv,final_refs}

\end{document}